\theoremstyle{plain}
\newtheorem{theorem}{Theorem}
\begin{document}
 
%\preprint{}

\title{Optimizing Quantum Search with a Binomial Version of Grover's Algorithm}

% Force line breaks with \\
%\thanks{}%
 
\author{Austin Gilliam}
\affiliation{
JPMorgan Chase}

\author{Marco Pistoia}
\affiliation{
JPMorgan Chase}

\author{Constantin Gonciulea}
\affiliation{
JPMorgan Chase}
 
%\collaboration{}
%\noaffiliation
 
\date{\today}% It is always \today, today,
             %  but any date may be explicitly specified
 
\begin{abstract}
Amplitude Amplification---a key component of Grover's Search algorithm---uses an iterative approach to systematically increase the probability of one or multiple target states. We present novel strategies to enhance the amplification procedure by partitioning the states into classes, whose probabilities are increased at different levels before or during amplification. The partitioning process is based on the binomial distribution.  If the classes to which the search  target states belong are known in advance, the number of iterations in the Amplitude Amplification algorithm can be drastically reduced compared to the standard version. In the more likely case in which the relevant classes are not known in advance, their selection can be configured at run time, or a random approach can be employed, similar to classical algorithms such as binary search. In particular, we apply this method in the context of our previously introduced Quantum Dictionary pattern, where keys and values are encoded in two separate registers, and the value-encoding method is independent of the type of superposition used in the key register. We consider this type of structure to be the natural setup for search.  We confirm the validity of our new approach through experimental results obtained on real quantum hardware, the Honeywell System Model HØ trapped-ion quantum computer.
\end{abstract}
 
%\pacs{}% PACS, the Physics and Astronomy
                                                % Classification Scheme.
%\keywords{Suggested keywords}%Use showkeys class option if keyword
                              %display desired
\maketitle
 
%\tableofcontents

%%%%%%%%%%%%%%%%%%%%%%%%%%%%%%%%%%%%%%%%%%%%%%%%%%%%%%%%%%%%%%%%%%%%%%%%%%%
\section{\label{sec:intro}Introduction}
%%%%%%%%%%%%%%%%%%%%%%%%%%%%%%%%%%%%%%%%%%%%%%%%%%%%%%%%%%%%%%%%%%%%%%%%%%%

Quantum Search is a fundamental building block of Quantum Computing, providing a quadratic advantage over its classical counterpart.

 In any context, searching requires a search space and a search condition. In Quantum Computing, a \emph{search space} is defined by the state of a quantum system, and a \emph{search condition} is specified by a \emph{quantum oracle}, which is required to flip the phases of the \emph{target states}, which are those satisfying the underlying condition. Alternatively, the marking can consist of setting certain qubits to fixed values.
 An oracle specific to the target states (search condition) is applied to a superposition state (search space) in order to mark the target states. The Grover Iterate will amplify the probability of the marked state(s).
 
In general, the oracle has to make some assumptions about how the search space is encoded, based on the problem that needs to be solved. In this paper, we are going to look into ways the oracle can adjust the superposition state to amplify the amplitude of the target states in an efficient way.

One can randomly try different types of superposition that favor a set of states, for example those with a given number of $1$s in their binary representations, essentially using a binomial superposition instead of a uniform one. The superposition type can be passed as a parameter to the algorithm.

In this paper, we make the following novel contributions:
\begin{enumerate}
    \item We outline a method for partitioning the basis states into classes, and amplifying the amplitudes of the states in each partition class in a non-uniform way, before repeatedly applying a Grover Iterate specific to a given search target state. When the partition class to which the target state belongs is sufficiently amplified, the number of times the Grover Iterate needs to be applied can be drastically reduced.
    \item This method can provide an additional parameter that can be used to optimize Quantum Search. In particular, in the adaptive version of Grover's Search algorithm, in addition to varying the number of times the Grover Iterate is applied, one can also vary a level of non-uniform amplitude pre-amplification.
    \item We provide the numerical analysis for the partitioning of basis states into classes based on the number of $1$s in their binary representation, which essentially leads to a binomial distribution for the partition amplitudes.
    \item We show how the method can be applied to set search, array element retrieval, and array element search.
    \item We also successfully validate the results on a real quantum computer, and we believe these to be one of the earliest successful quantum search experiments on 4 and 5 qubits using a quantum computer.
\end{enumerate}

The remainder of this paper is structured as follows: Section \ref{sec:grover} goes over the standard, uniform version of Grover's algorithm, which is contrasted with our contribution---the binomial version---in Section \ref{sec:generalized-grover}.  Additional analysis and examples are provided in Section \ref{sec:analysis}.  Section \ref{sec:experimental-results} presents an empirical evaluation of the novel, binomial version of Grover's Search, side by side with the uniform version.  The experimental results are conducted on real hardware.  Finally, Section \ref{sec:conclusion} concludes the paper.

%%%%%%%%%%%%%%%%%%%%%%%%%%%%%%%%%%%%%%%%%%%%%%%%%%%%%%%%%%%%%%%%%%%%%%%%%%%
\section{\label{sec:grover}Standard (Uniform) Version of Grover's Algorithm}
%%%%%%%%%%%%%%%%%%%%%%%%%%%%%%%%%%%%%%%%%%%%%%%%%%%%%%%%%%%%%%%%%%%%%%%%%%%

%%%%%%%%%%%%%%%%%%%%%%%%%%%%%%%
\subsection{The Grover Iterate\label{subsec:grover-iterate}}
%%%%%%%%%%%%%%%%%%%%%%%%%%%%%%%
Given an $n$-qubit quantum system with basis states that correspond to the set $S = \{0, 1,\ldots, 2^n-1\}$, a \emph{predicate function} $f \colon S \to \{0,1\}$, and a set $E \subseteq \{\ket{0},\ket{1},\ldots,\ket{2^n-1}\}$ of \emph{good} basis states that $f$ maps to $1$, the \emph{Grover Iterate} $G$ is comprised of three components~\cite{Brassard2000}:
\begin{enumerate}  
\item A unitary \emph{state-preparation operator} $A$
\item An \emph{oracle} $O$ corresponding to $f$ that recognizes all the elements in $E$ and multiplies their amplitudes by $-1$. For example:
\begin{eqnarray}
OA\ket{0}_n = \sum_{s \notin E} a(s)\ket{s}_n - \sum_{s \in E} a(s)\ket{s}_n
\end{eqnarray}
\item The \emph{diffusion operator} $D$, which multiplies the amplitude of the $\ket{0}_n$ state by $-1$
\end{enumerate}
Given the above, the Grover Iterate is defined as  $G = -ADA^{\dagger}O$. For the \emph{normalized states}, defined as follows:
\begin{eqnarray}
\ket{X_0} = \frac{1}{\sqrt{p(E{'})}}\sum_{s \in E{'}} a(s)\ket{s}
\label{eq:x0}
\end{eqnarray}
\begin{eqnarray}
\ket{X_1} = \frac{1}{\sqrt{p(E)}}\sum_{s \in E} a(s)\ket{s}
\label{eq:x1}
\end{eqnarray}
where $E{'}=S \setminus E$ is the complement of $E$, after applying $j$ iterations of $G$ to $A\ket{0}$, we obtain:
\begin{equation}
\begin{aligned}[b]
G^{j}A\ket{0} &= \cos{((2j+1)\theta)}\ket{X_0} + \\
&\quad\sin{((2j+1)\theta)}\ket{X_1}
\label{eq:k-applications-grover}
\end{aligned}
\end{equation}
where $\theta$ is the angle in $[0, \pi/2]$ that satisfies $\sin^2{\theta=p(E)}$.

%%%%%%%%%%%%%%%%%%%%%%%%%%%%%%%
\subsection{Grover's Search Algorithm \label{subsec:grover-search-algorithm}}
%%%%%%%%%%%%%%%%%%%%%%%%%%%%%%%

The original Grover's Search algorithm was formulated as a particular case of the formalism in Section~\ref{subsec:grover-iterate}, where $A=H^{\otimes n}$ creates an equal superposition, and there is only one search target, i.e. $E$ consists of a single element, and $p(E)=\frac{1}{2^n}$.

\begin{theorem}[Uniform Amplification]
The amplitude of a target state marked by an oracle $O$ after applying $j \geq 0$ Grover iterations $G = -H^{\otimes n}DH^{\otimes n}O$ is
\begin{equation}
\sin{((2j+1)\theta)}
\label{eq:amplitude-target-state}
\end{equation}
where
\begin{equation}
\theta = \arcsin\left(\sqrt{\frac{1}{2^n}}\right) .
\label{eq:amplitude-target-state-arcsin}
\end{equation}

\end{theorem}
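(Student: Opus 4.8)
The plan is to obtain the statement as a direct specialization of the general amplitude-amplification identity already recorded in Eq.~\eqref{eq:k-applications-grover}. That identity holds for an arbitrary state-preparation unitary $A$; here one only has to substitute the data of the original Grover setting ($A = H^{\otimes n}$, a single good state) and then read off the coefficient of that good basis state in the resulting expression.

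First I would fix notation: let $E = \{s^{*}\}$ be the unique target, and take $A = H^{\otimes n}$, so that $A\ket{0} = 2^{-n/2}\sum_{s \in S}\ket{s}$ and every amplitude equals $a(s) = 2^{-n/2}$. Then $p(E) = |a(s^{*})|^{2} = 2^{-n}$, which pins down the rotation angle as $\theta = \arcsin\!\left(\sqrt{1/2^{n}}\right)$, matching Eq.~\eqref{eq:amplitude-target-state-arcsin}. Next I would evaluate the normalized good vector of Eq.~\eqref{eq:x1}: since the sum runs over the single index $s^{*}$, one gets $\ket{X_1} = \frac{1}{\sqrt{p(E)}}\,a(s^{*})\ket{s^{*}} = \ket{s^{*}}$, while $\ket{X_0}$ from Eq.~\eqref{eq:x0} is supported entirely on $S \setminus E$ and is therefore orthogonal to $\ket{s^{*}}$.

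With these substitutions, Eq.~\eqref{eq:k-applications-grover} reads $G^{j} A\ket{0} = \cos\!\left((2j+1)\theta\right)\ket{X_0} + \sin\!\left((2j+1)\theta\right)\ket{s^{*}}$, and projecting onto $\ket{s^{*}}$ isolates the amplitude of the target state as $\sin\!\left((2j+1)\theta\right)$, which is exactly Eq.~\eqref{eq:amplitude-target-state}. For completeness I would also recall why Eq.~\eqref{eq:k-applications-grover} holds in this case: the oracle $O$ acts as the reflection through $\ket{X_0}$ inside the plane $\mathrm{span}\{\ket{X_0},\ket{X_1}\}$, the operator $-A D A^{\dagger} = -H^{\otimes n} D H^{\otimes n}$ acts as the reflection through $A\ket{0}$, and the product of two reflections in a plane is a rotation; a short computation gives rotation angle $2\theta$, and since $A\ket{0}$ sits at angle $\theta$ from $\ket{X_0}$, the state after $j$ iterations sits at angle $(2j+1)\theta$ from $\ket{X_0}$, yielding the stated coefficients.

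Since the substantive content is carried by the already-established Eq.~\eqref{eq:k-applications-grover}, I do not expect a genuine obstacle. The one step deserving care is checking that the normalization in Eq.~\eqref{eq:x1} collapses $\ket{X_1}$ onto the bare basis vector $\ket{s^{*}}$ with no residual phase, so that the coefficient $\sin\!\left((2j+1)\theta\right)$ is literally the physical amplitude of the target; this relies on $A = H^{\otimes n}$ producing the real, positive amplitude $2^{-n/2}$ on every basis state, and on $E$ being a singleton.
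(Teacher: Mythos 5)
Your proposal is correct and follows the same route as the paper, which presents this theorem precisely as the specialization of the general identity in Eq.~\eqref{eq:k-applications-grover} to $A = H^{\otimes n}$ with a single target, so that $p(E) = 1/2^n$ and $\ket{X_1}$ reduces to the target basis state. Your added check that the normalization in Eq.~\eqref{eq:x1} leaves no residual phase is a reasonable explicit touch, but the substance matches the paper's argument.
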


On the unit circle, the amplitude of the target state starts at $\sin\theta$ and becomes $\sin{((2j+1)\theta)}$ after $j$ iterations, thus approaching $1$ as $(2j+1)\theta$ approaches $\pi/2$.

Note that we can remove the negative sign in the definition of the Grover Iterate, and then Expression~\ref{eq:amplitude-target-state} changes to:

\begin{eqnarray}
\cos{((2j+1)\theta)}
\label{eq:amplitude-target-state-cos-u}
\end{eqnarray}

For a number of Grover iterations $j \geq 0$, measuring the state described in Eq.~\ref{eq:k-applications-grover} will return the target state with probability

\begin{eqnarray}
\sin^2{((2j+1)\theta)}. 
\label{eq:probability-target-state}
\end{eqnarray}

Choosing $j = [\frac{\pi}{4\theta} - \frac{1}{2}]$ will maximally amplify the amplitude of the target state. We will denote this number of iterations by $j_\mathit{uniform}$. Sometimes $j = \lfloor\frac{\pi}{4}\sqrt{2^n}\rfloor$ is used as a simpler approximation~\cite{Grover1996}. 

%%%%%%%%%%%%%%%%%%%%%%%%%%%%%%%%%%%%%%%%%%%%%%%%%%%%%%%%%%%%%%%%%%%%%%%%%%%
\section{\label{sec:generalized-grover}Binomial Version of Grover's Algorithm}
%%%%%%%%%%%%%%%%%%%%%%%%%%%%%%%%%%%%%%%%%%%%%%%%%%%%%%%%%%%%%%%%%%%%%%%%%%%

%%%%%%%%%%%%%%%%%%%%%%%%%%%%%%%
\subsection{Binomial Grover Iterate\label{subsec:binomial-grover-iterate}}
%%%%%%%%%%%%%%%%%%%%%%%%%%%%%%%
In~\cite{Gilliam2020Optimizing, Gilliam2020Canonical} we introduced a generalization of the Amplitude Amplification algorithm~\cite{Grover1996, Brassard2000}, with the following ingredients:

\begin{itemize}
    \item Any unitary operator $A$, which creates a superposition state
    \item A \emph{property-encoding operator} $B$ that maps distinct groups of states in different ways
    \item A \emph{simple, binary-matching oracle} $O_B$, whose purpose is to match the value assigned to the selected states
    \item The \emph{diffusion operator} $D$, which multiplies the amplitude of the $\ket{0}_n$ state by $-1$
\end{itemize}
Based on the above, the Grover Iterate is now defined as $G = -ADA^{\dagger}O$, where $O=B^{\dagger}O_BB$ is the \emph{canonical oracle} that, by construction, flips the phases of the selected states, and the Amplitude Amplification routine takes the form of $BG^jA$, where $j$ is the number of times the Grover Iterate is applied.

In this paper, we introduce a new form of this type of generalization, where the operator $B$ prepares a binomial distribution before the application of the simple oracle that marks a given target state. The amplitude of a basis state will depend on the number of $1$s in its binary representation, and all states with the same number of $1$s in their binary representation will have the same amplitude.

%%%%%%%%%%%%%%%%%%%%%%%%%%%%%%%
\subsection{Binomial Grover's Search Algorithm \label{subsec:generalized-grover-search-algorithm}}
%%%%%%%%%%%%%%%%%%%%%%%%%%%%%%%
As in the standard Grover's Search algorithm, we assume that all the states in the search space are in equal superposition, created by $A=H^{\otimes n}$.
The operator $B$ converts the uniform superposition to a binomial one, where the basis states with the same number of $1$s in their binary representation will have the same amplitude. The set of distinct amplitudes is a geometric sequence.

When starting with a uniform superposition, for a given $\omega$ such that $0 \leq \omega \leq \pi$, the Grover Iterate is defined by:

$$G(\omega) = -H^{\otimes n}DH^{\otimes n}B^{\dagger}(\omega)OB(\omega)$$
where
$$B(\omega) = R_Y^{\otimes n}(\omega)H^{\otimes n} = R_Y^{\otimes n}(\pi/2 + \omega)Z^{\otimes n}$$.

When starting with a binomial superposition, created by $A = R_Y^{\otimes n}(\omega)$, we have:
$$G(\omega) = -R_Y^{\otimes n}(\omega)DR_Y^{\otimes n}(-\omega)O$$.

In both the uniform and binomial cases the following result holds:

\begin{theorem}[Binomial Amplification]
Given an $n$-qubit quantum system, a target state marked by an oracle $O$, whose binary representation has a number $0 \leq k \leq n$ of $1$s, and an angle $0 \leq \omega \leq \pi$, the amplitude of the target state after applying the modified amplitude amplification procedure with $j \geq 0$ Grover iterations $G(\omega)$  is
\begin{eqnarray}
\sin((2j + 1) \theta(\omega))
\label{eq:amplitude-target-state-generalized}
\end{eqnarray}

where 

\begin{eqnarray}
\theta(\omega) = \arcsin\left(\sin^{k}\frac{\omega}{2} \cos^{n-k}\frac{\omega}{2}\right).
\end{eqnarray}

\end{theorem}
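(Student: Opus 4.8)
\emph{Proof proposal.} The plan is to reduce the binomial procedure, in both of its stated forms, to the standard amplitude-amplification framework of Section~\ref{subsec:grover-iterate} applied with the effective state-preparation operator $\mathcal{A} = R_Y^{\otimes n}(\omega)$, and then simply to read off $\theta(\omega)$ from the amplitude that $\mathcal{A}$ assigns to the target basis state.

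First I would verify that, in \emph{both} the uniform-start and the binomial-start cases, the procedure produces exactly the state $\big(-R_Y^{\otimes n}(\omega)\,D\,R_Y^{\otimes n}(-\omega)\,O\big)^{j} R_Y^{\otimes n}(\omega)\ket{0}$, where $O$ is the phase oracle that flips the sign of the single target basis state $\ket{t}$. For the binomial start this is immediate: there $A = R_Y^{\otimes n}(\omega)$ and $G(\omega) = -A D A^{\dagger} O$ is literally the Grover iterate of Section~\ref{subsec:grover-iterate}. For the uniform start one uses $B(\omega) = R_Y^{\otimes n}(\omega) H^{\otimes n}$ together with $H^{\otimes n}H^{\otimes n} = I$: expanding $G(\omega) = -H^{\otimes n} D H^{\otimes n} B^{\dagger}(\omega) O B(\omega)$ collapses to $G(\omega) = H^{\otimes n}\big[-D\,R_Y^{\otimes n}(-\omega)\,O\,R_Y^{\otimes n}(\omega)\big]H^{\otimes n}$, so the Hadamards telescope across powers of $G(\omega)$ and against the outer $A = H^{\otimes n}$ and the trailing $B(\omega)$, leaving the same state; the conjugation identity $[-\mathcal{A}D\mathcal{A}^{\dagger}O]^{j}\mathcal{A} = \mathcal{A}[-D\mathcal{A}^{\dagger}O\mathcal{A}]^{j}$ closes the gap. (The consistency remark $B(\omega) = R_Y^{\otimes n}(\pi/2+\omega)Z^{\otimes n}$ is just $H = R_Y(\pi/2)Z$ tensored $n$ times and is not needed for the argument.)

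Having identified the procedure with standard amplitude amplification driven by $\mathcal{A} = R_Y^{\otimes n}(\omega)$ and a single-element good set $E = \{\ket{t}\}$, the analysis behind Eq.~\ref{eq:k-applications-grover} applies: the iterate preserves the two-dimensional subspace $\mathrm{span}\{\ket{X_0},\ket{X_1}\}$ and, after $j$ applications, the state is $\cos((2j+1)\theta)\ket{X_0} + \sin((2j+1)\theta)\ket{X_1}$ with $\sin^2\theta = p(E)$. Since $E$ is a singleton, Eq.~\ref{eq:x1} gives $\ket{X_1} = \ket{t}$ (the amplitude of $\ket t$ being nonnegative), so the target amplitude is exactly $\sin((2j+1)\theta)$, which is Eq.~\ref{eq:amplitude-target-state-generalized}. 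It then remains to evaluate $\theta$: because $R_Y(\omega)\ket{0} = \cos\tfrac{\omega}{2}\ket{0} + \sin\tfrac{\omega}{2}\ket{1}$, a basis state with exactly $k$ ones has amplitude $\sin^{k}\tfrac{\omega}{2}\cos^{n-k}\tfrac{\omega}{2}$ in $\mathcal{A}\ket{0}$, so $p(E) = \sin^{2k}\tfrac{\omega}{2}\cos^{2(n-k)}\tfrac{\omega}{2}$; the hypothesis $0 \le \omega \le \pi$ (hence $\tfrac{\omega}{2}\in[0,\pi/2]$) makes this a legitimate value in $[0,1]$, giving $\theta = \theta(\omega) = \arcsin\!\big(\sin^{k}\tfrac{\omega}{2}\cos^{n-k}\tfrac{\omega}{2}\big)$. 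A one-line sanity check is that $\mathcal{A}\ket{0}$ is normalized, $\sum_{k=0}^{n}\binom{n}{k}\sin^{2k}\tfrac{\omega}{2}\cos^{2(n-k)}\tfrac{\omega}{2}=1$, which is the binomial theorem and also explains the name.

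The main obstacle I anticipate is purely the bookkeeping in the first step: getting the telescoping of the $H^{\otimes n}$ and $B(\omega)$ factors and the placement of the overall $-1$ exactly right, so that both formulations are seen to coincide with the canonical $-\mathcal{A}D\mathcal{A}^{\dagger}O$ form. Everything after that is a direct specialization of results already established in Section~\ref{subsec:grover-iterate}. The only other point requiring care is the single-target assumption, which is what lets $\ket{X_1}$ collapse to $\ket{t}$; with several targets the same computation would instead yield the amplitude of the normalized superposition of the good states, with $p(E)$ the sum of their squared binomial amplitudes.
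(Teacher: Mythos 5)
Your proposal is correct and follows essentially the same route as the paper's proof: reduce both formulations to standard amplitude amplification with effective state preparation $R_Y^{\otimes n}(\omega)$, then read off $\sin\theta(\omega)$ as the (nonnegative) amplitude $\sin^{k}\frac{\omega}{2}\cos^{n-k}\frac{\omega}{2}$ that this operator assigns to the target basis state. The only difference is one of detail: you spell out the telescoping of the $H^{\otimes n}$ and $B(\omega)$ factors showing the uniform-start and binomial-start procedures produce the same state, a step the paper compresses into the single remark that the setup is equivalent to the standard framework with $A = R_Y^{\otimes n}(\omega)$.
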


\begin{proof}
The setup is equivalent to the one described in~\cite{Brassard2000} with $A = R_Y^{\otimes n}(\omega)$. 

For a single qubit: 
$$R_Y(\omega)\ket{0} = \cos\frac{\omega}{2}\ket{0} +  \sin\frac{\omega}{2}\ket{1}.$$

Therefore, for $n$ qubits:

$$R_Y^{\otimes n}(\omega)\ket{0}_n = \sum_{i = 0}^{2^n - 1} \sin^{k(i)}\frac{\omega}{2} \cos^{n-k(i)}\frac{\omega}{2} \ket{i}_n,$$
where $k(i)$ denotes the number of $1$s in the binary representation of $i$, also called the \emph{Hamming weight} of this representation.

Note that all amplitudes are non-negative real numbers. Following the argument in~\cite{Boyer1998}, if $\theta$ denotes the angle between a non-target state vector and the superposition state vector, then $\sin(\theta)$ equals the amplitude of the target state.

\end{proof}

Compared to the uniform version, which uses a fixed angle $\theta_{uniform}$, in the binomial version we have a parameterized angle $\theta(\omega)$ that varies in a range that includes $\theta_{uniform}$.

For a given $0 \le k \le n$, denote the amplitude of a state whose binary representation contains $k$ $1$s before the amplification process by:

$$a_k(\omega) = \sin\left(\theta(\omega)\right) = \sin^{k}\frac{\omega}{2} \cos^{n-k}\frac{\omega}{2}$$ 

It is easy to check that, for the range of $\omega$ that we are interested in:
$$a_k(\omega) \le \frac{1}{2}.$$

The amplification process is visualized in Figure~\ref{fig:unit_circle}. The angle between the non-target  state vector and the superposition state vector is initially $\theta(\omega$), and each Grover iteration increases it by $2\theta(\omega$). The best case scenario is to only perform one iteration and make this angle $\pi/2$, corresponding to a measurement probability of $1$ for the target state, and it occurs when $\theta(\omega) = \pi/6$ and the initial amplitude is $\frac{1}{2}$. 

\begin{figure}[tb]
\includegraphics[width=5cm]{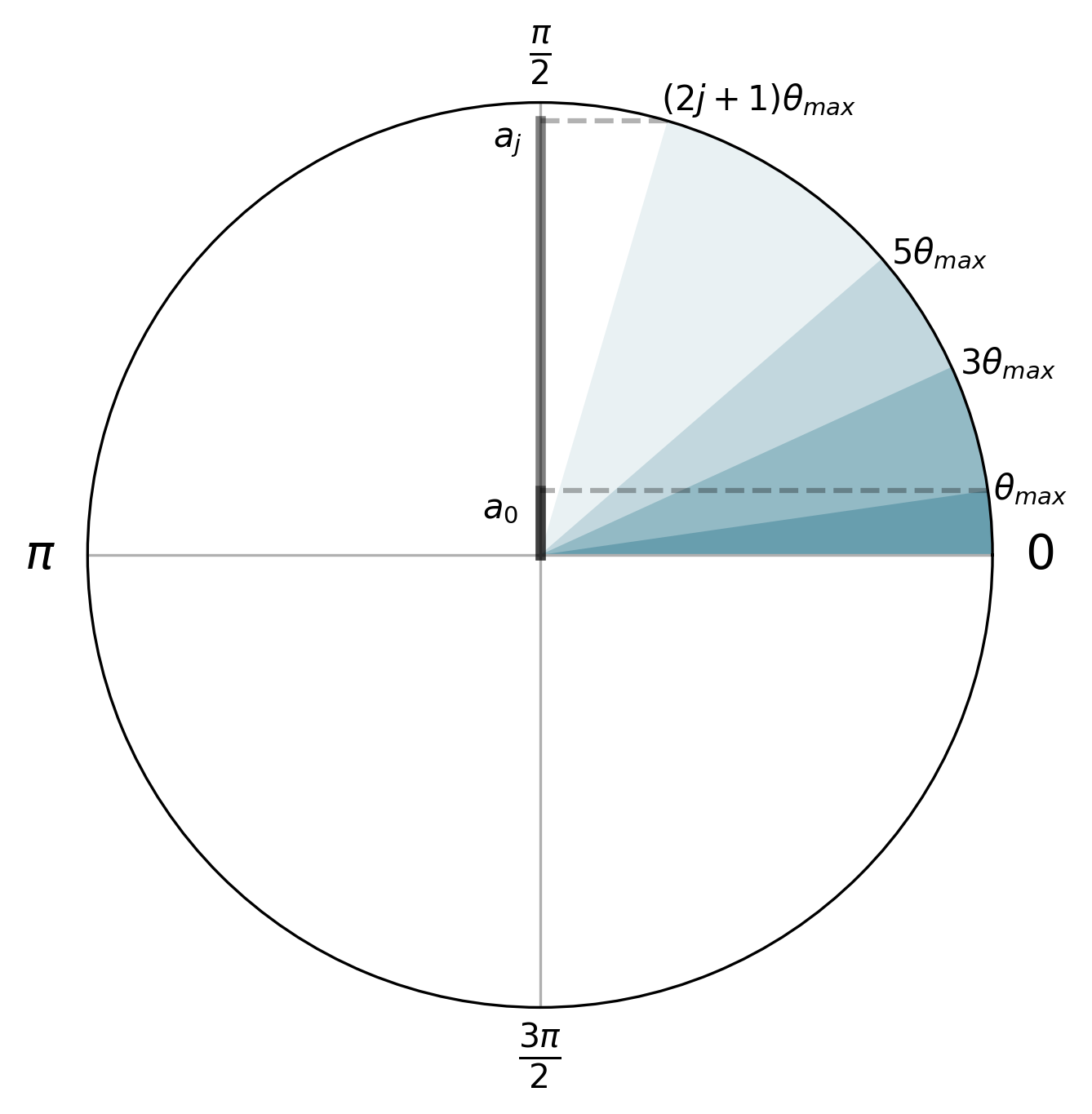}
\caption{\label{fig:unit_circle}Multiples of the parameterized angle $\theta_\mathit{max}$ plotted on a unit circle, where $j$ is the number of Grover iterations, and amplitude $a_j=\sin((2j+1)\theta_\mathit{max})$.}
\end{figure}

Intuitively, this allows for the following improvements:

\begin{itemize}
    \item Taking $\theta$ to be the upper bound of the range leads to a larger amplitude and probability of the target state after the same number of iterations as in the uniform version
    \item We can choose a parameter $\omega$ such that $\theta(\omega)$ is $\pi/2$ for a well chosen number of iterations.
\end{itemize}

Note that when $\omega=\frac{\pi}{2}$ we retrieve Expression~\ref{eq:amplitude-target-state}:

\begin{eqnarray}
% \sin((2j + 1) \arcsin(\sin(\frac{\pi}{4})^k \cos(\frac{\pi}{4})^{n-k})) \\
\sin((2j + 1) \arcsin((\frac{1}{\sqrt{2}})^n)
\end{eqnarray}

As noted in the uniform version,  we can remove the negative sign in the definition of the Grover Iterate, and then Expression~\ref{eq:amplitude-target-state-generalized} changes to:

\begin{eqnarray}
\cos{((2j+1)\theta)}
\label{eq:amplitude-target-state-cos}
\end{eqnarray}

In order to minimize the number of Grover iterations, we are interested in maximizing the angle $\theta(\omega$). As mentioned before: $a_k(\omega) \le \frac{1}{2}.$

When $k = 0$ we can attain equality with: 
$$\omega_\mathit{max} = 2\arccos(\frac{1}{2^{1/n}}).$$

When $k = n$ we can again attain equality with: 
$$\omega_\mathit{max} = 2\arcsin(\frac{1}{2^{1/n}}).$$

This means that for $k=0$ or $k=n$ we only need one Grover iteration to amplify the probability of the target state to $1$.

For $ 0 < k < n$, if we look at the critical points of $a_k(\omega)$, whose derivative with respect to $\omega$ is:

 $$  \frac{1}{2}\sin^{k-1}\frac{\omega}{2} \cos^{n-k-1}\frac{\omega}{2} \left(k\cos^{2}\frac{\omega}{2} - (n-k)\sin^{2}\frac{\omega}{2}\right)$$

we find out that in this case:

$$\omega_\mathit{max} = 2\arctan\left(\sqrt{\frac{k}{n-k}}\right)$$.

For a balanced outcome $k = n-k$, we retrieve the angle in standard Grover: 
    $$\omega_\mathit{max} = 2\arctan(1) = \frac{\pi}{2}$$.

We will use the notation $\theta_\mathit{max} = \theta(\omega_\mathit{max})$ to refer to the upper range of $\theta(\omega)$ in the binomial version of the amplitude amplification process, and $\theta_\mathit{uniform} = \arcsin\left(\sqrt{\frac{1}{2^n}}\right)$ to refer to the angle in the uniform version.

After a number of Grover iterations, the angle between the search non-target state vector and the superposition state vector approaches or surpasses $\pi/2$. We are interested in finding this particular number of iterations, and adjust $\omega$ in order to make the angle match $\pi/2$. 

For fixed $n$ and $k$, the procedure can be broken down into 2 steps: 

\begin{itemize}
    
    \item find $j_\mathit{ideal}$, defined as the minimum $j$ such that \mbox{$(2j + 1) \theta_\mathit{max} \ge \pi/2$}:
    
    $$j_\mathit{ideal} = \lceil\frac{\pi}{4\theta_\mathit{max}} - \frac{1}{2}\rceil$$
    
    and define:
    $$\theta_\mathit{ideal} = \frac{\pi}{2(2j_\mathit{ideal}+1)}$$
    
    \item find $\omega_\mathit{ideal}$ that satisfies:
    $$\sin^{k}\frac{\omega_\mathit{ideal}}{2} \cos^{n-k}\frac{\omega_\mathit{ideal}}{2}=\sin(\theta_\mathit{ideal})$$

\end{itemize}

Applying the binomial version of the amplitude amplification process with $j_\mathit{ideal}$ iterations will lead to a measurement probability of $1$ for the search target state. Since $\theta_\mathit{max} \ge \theta_\mathit{uniform}$, the number of iterations 
$j_\mathit{ideal}$ is less than or equal to the number of iterations used in the standard Amplitude Amplification procedure to maximize the amplitude of the search target state.

%%%%%%%%%%%%%%%%%%%%%%%%%%%%%%%%%%%%%%%%%%%%%%%%%%%%%%%%%%%%%%%%%%%%%%%%%%%
\section{\label{sec:analysis}Additional Analysis and Examples}
%%%%%%%%%%%%%%%%%%%%%%%%%%%%%%%%%%%%%%%%%%%%%%%%%%%%%%%%%%%%%%%%%%%%%%%%%%%

In this section we analyze the impact of choosing different partitions and numbers of iterations on the final result. We will denote by $n$ the number of qubits in a quantum system, and by $k$ the number of $1$s in the binary representation of a search target state.

In the extreme cases when $k = 0$ or $k = n$, the binomial version of the amplification process only needs one iteration.

With the binomial version of the amplitude amplification process, we have seen that for a given $0 \le k \le n$, we can amplify the probability of a state whose binary representation has $k$ $1$s to the maximum possible of $1$. 
The amplification takes $j_\mathit{ideal}$ iterations. 
In Table 1, we can see how $j_\mathit{ideal}$ compares to $j_\mathit{uniform}$, the best number of iterations in the standard amplitude amplification process, for an 8-qubit circuit.

\begin{center}
\begin{tabular}{||c | c | c||}
\hline
$k$ &  $j_\mathit{uniform}$ & $j_\mathit{ideal}$ \\ [0.5ex]
\hline\hline
0 & 12 & 1 \\
\hline
1 & 12 & 3 \\
\hline
2 & 12 & 7 \\
\hline
3 & 12 & 11 \\
\hline
4 & 12 & 12 \\
\hline
5 & 12 & 11 \\
\hline
6 & 12 & 7 \\
\hline
7 & 12 & 3 \\
\hline
8 & 12 & 1 \\
\hline
\end{tabular}
\end{center}
TABLE 1. A comparison of the number of iterations required to measure the target state with maximum probability for an 8-qubit state.
\newline
\newline
Fig.~\ref{fig:qubits_v_depth} shows how the number of iterations required to get the maximum probability scales in regards to $n$.

\begin{figure}[t]
\includegraphics[width=8cm]{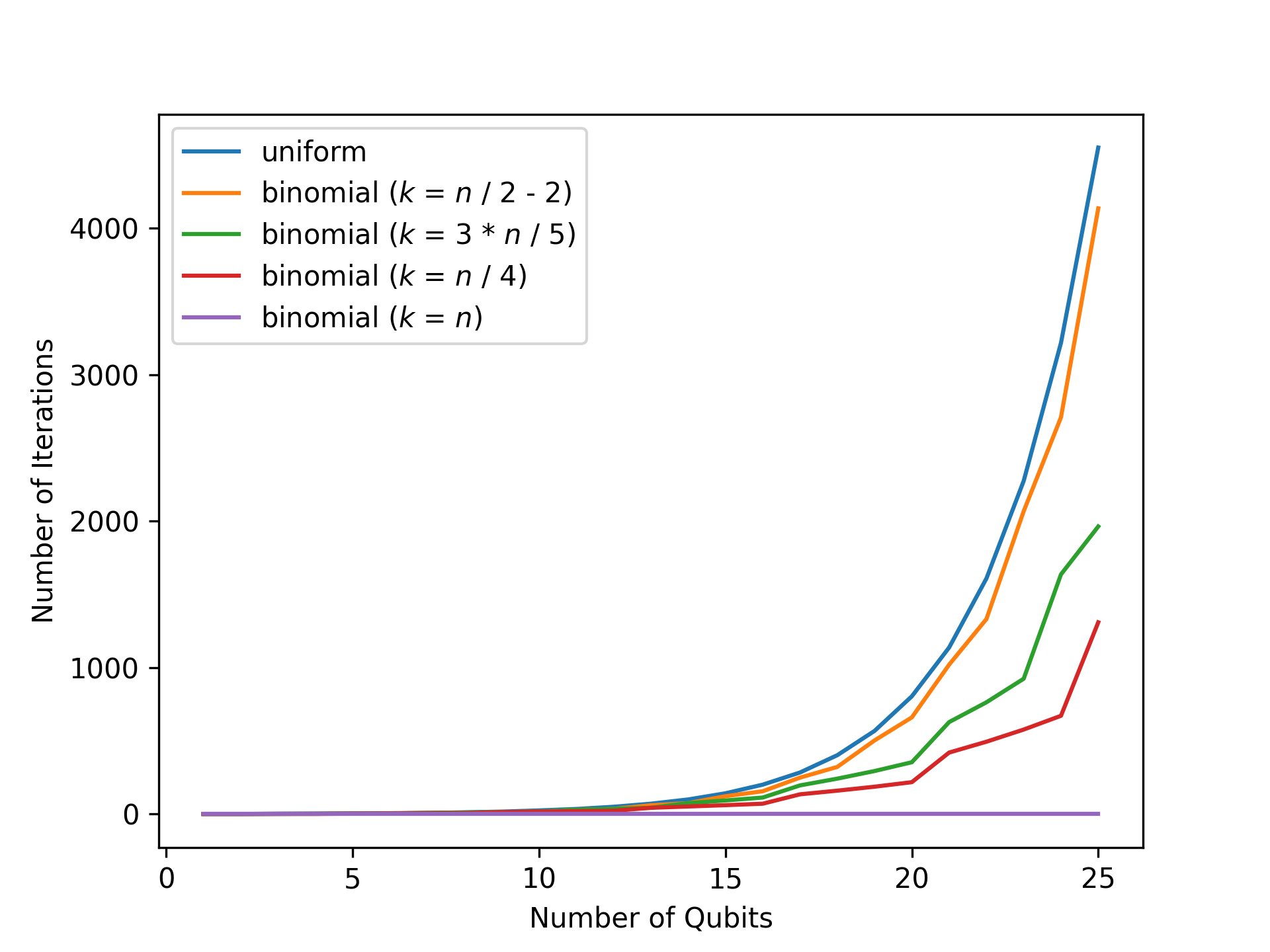}
\caption{\label{fig:qubits_v_depth}A comparison of the number of Grover iterations required to reach the maximum probability with respect to the number of qubits. Results for uniform Grover and binomial Grover with various $k$ values are shown.}
\end{figure}

Given that on current quantum hardware the depth of a circuit with more than $2$ or $3$ Grover iterations is prohibitive, binomial amplification can provide a clear advantage for certain $k$s. 
In this case, we have no choice but to fix the maximum number of iterations.
Continuing the example in Table 1, we can see how the probabilities compare in the uniform and binomial versions for a given number of iterations in Fig.~\ref{fig:iteration_vs_probability}.

\begin{figure}[H]
\includegraphics[width=8.5cm]{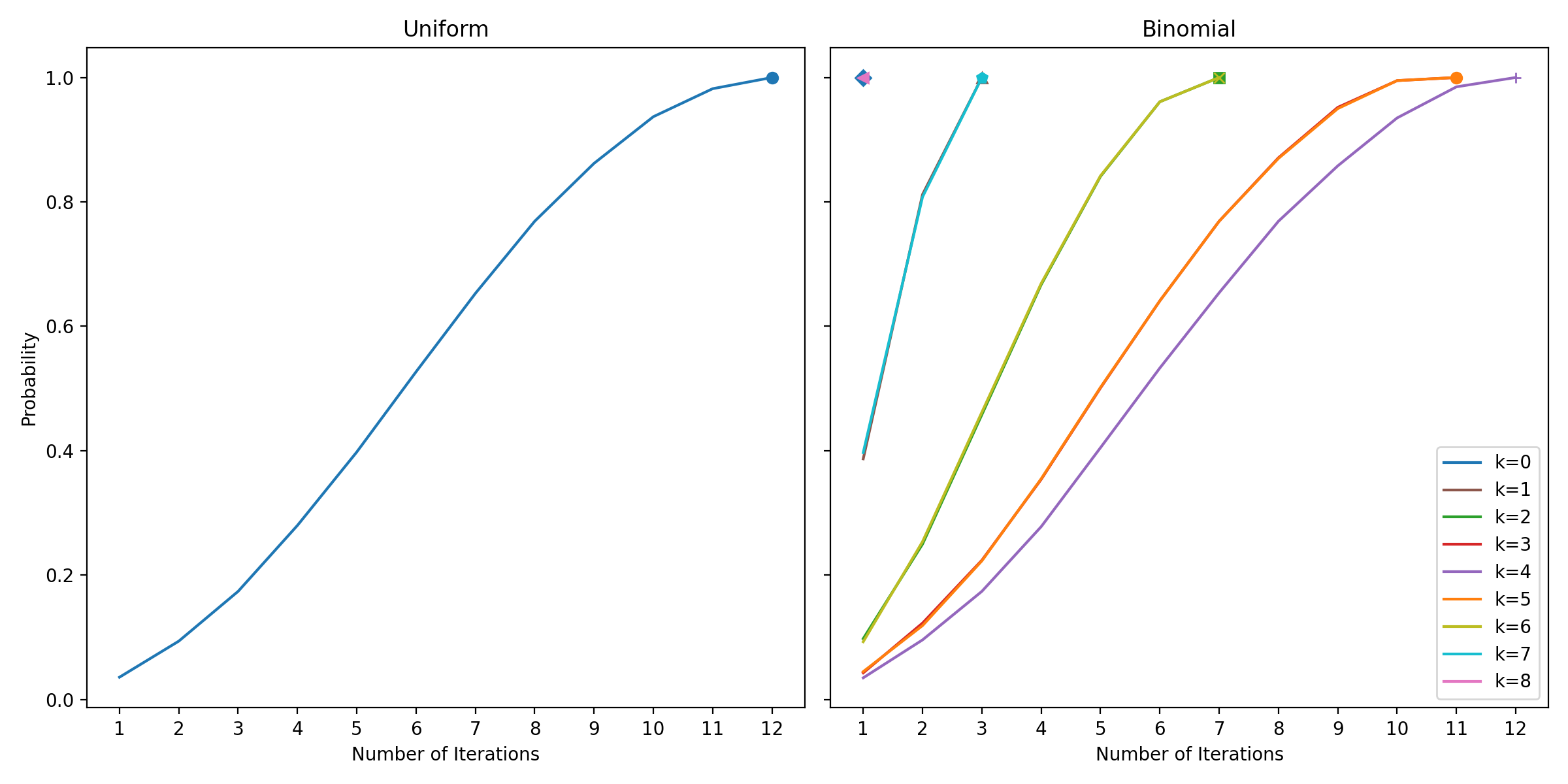}
\caption{\label{fig:iteration_vs_probability}A comparison of uniform and binomial Grover for $8$ qubits, where $k$ is the number of $1$s in the binary label of the outcome.}
\end{figure}

As a specific example, consider the case with $n=8$ qubits and a target state of $10000001$ (the binary representation of $129$, containing $k=2$ $1$s), with a limitation of $j=6$ Grover iterations. We can calculate $\theta_{max}$ from $\omega_{max}$. As seen in Fig.~\ref{fig:grover_compare_n_8}, we get a higher probability when compared to the uniform approach. In Sec.~\ref{sec:experimental-results}, we show additional examples that handle this constraint.

\begin{figure}[htb]
\includegraphics[width=6cm]{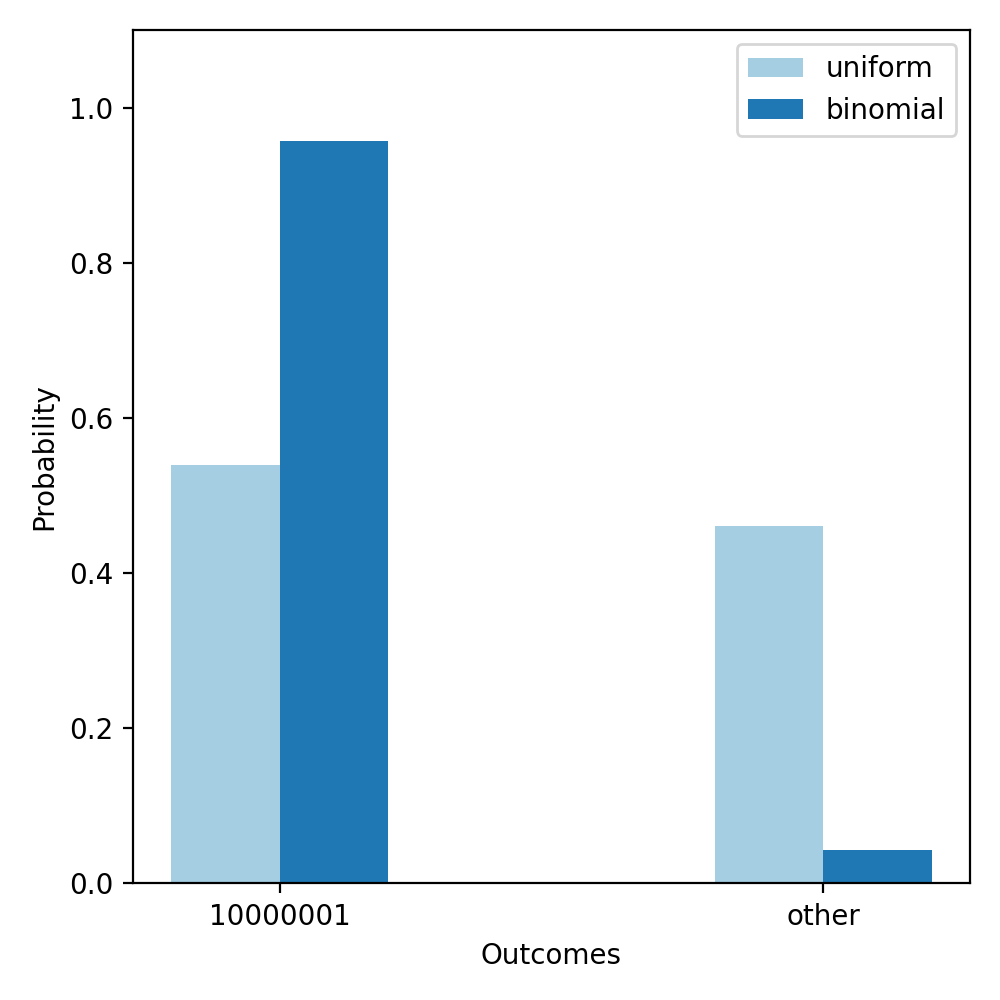}
\caption{\label{fig:grover_compare_n_8}A comparison of uniform and binomial Grover on a state with $8$ qubits, $6$ iterations, and search target $10000001$. The probability of non-target states are summed together for easier comparison.}
\end{figure}

Another strategy for dealing with the limitation on the number of Grover iterations is to use an adaptive version~\cite{Durr1996, Bulger2003, Baritompa2005, Gilliam2019}, where a random number of iterations is applied.
The binomial version allows one to also randomly choose the partition to which search target states are assumed to belong.

For the case of multiple search state targets, instead of a single partition, we have to allow for multiple ones.
Let us denote by $T$ the set of target states, and $k(i)$ as the Hamming weight (number of $1$s in the binary representation) of a state $\ket{i}$.
The angle $\theta$ between the superposition state vector and the non-target state is given by:

$$
\theta(\omega) = \arcsin\left(\sum_{i \in T}\sin^{k(i)}\frac{\omega}{2} \cos^{n-k(i)}\frac{\omega}{2}\right).
$$

Note that for $\omega=\frac{\pi}{4}$, we retrieve the multi-target uniform version where $\theta = \arcsin(|T|\frac{1}{\sqrt{2^n}})$.

%%%%%%%%%%%%%%%%%%%%%%%%%%%%%%%%%%%%%%%%%%%%%%%%%%%%%%%%%%%%%%%%%%%%%%%%%%%
\section{\label{sec:experimental-results}Experimental Results}
%%%%%%%%%%%%%%%%%%%%%%%%%%%%%%%%%%%%%%%%%%%%%%%%%%%%%%%%%%%%%%%%%%%%%%%%%%%

In this section, we will apply the concepts described in the previous sections to examples of quantum search. Each experiment described in this section was run on real quantum hardware---the Honeywell System Model HØ trapped-ion quantum computer with quantum volume 64~\cite{HoneywellArchitecture, QuantumVolume}---with one execution consisting of 1024 shots. No calibration or noise mitigation was applied.

%%%%%%%%%%%%%%%%%%%%%%%%%%%%%%%%%%%%%%%%%%%%%%%%
\subsection{\label{subsec:set-search}Set Element Search}
%%%%%%%%%%%%%%%%%%%%%%%%%%%%%%%%%%%%%%%%%%%%%%%%
Let us consider an example of set search using binomial Grover, and compare the results with the uniform approach.
In Figure~\ref{fig:set-search} we show a $4$-qubit circuit implementing the binomial version of Grover's Search algorithm with $1$ iteration, where the target state is $1101$ (the binary representation of $13$).

\begin{figure}[htb]
\includegraphics[width=8cm]{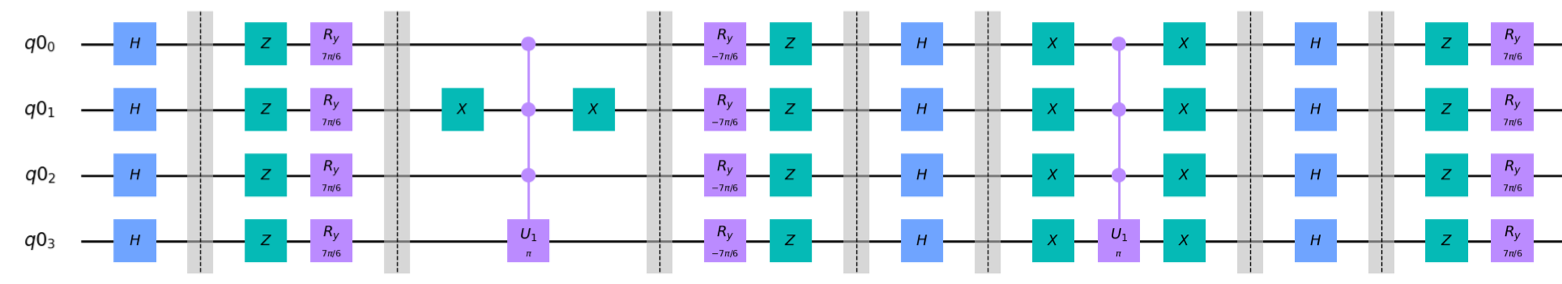}
\caption{\label{fig:set-search}A circuit implementing the binomial version of set element search with $1$ iteration and search target $1101$.}
\end{figure}

Note that we assume an equal superposition to start, and then convert to a binomial distribution.
The rotation angle of $R_Y$ is dependent on the number of $1$s in the label of the target state, as described in Sec.~\ref{subsec:generalized-grover-search-algorithm}.
For $10$ out of $16$ values we can improve the probability using a binomial approach, at the expense of knowing more information about the search target state. The result of this circuit can be seen in Fig.~\ref{fig:set-search-n-4}.

\begin{figure}[tb]
\includegraphics[width=8cm]{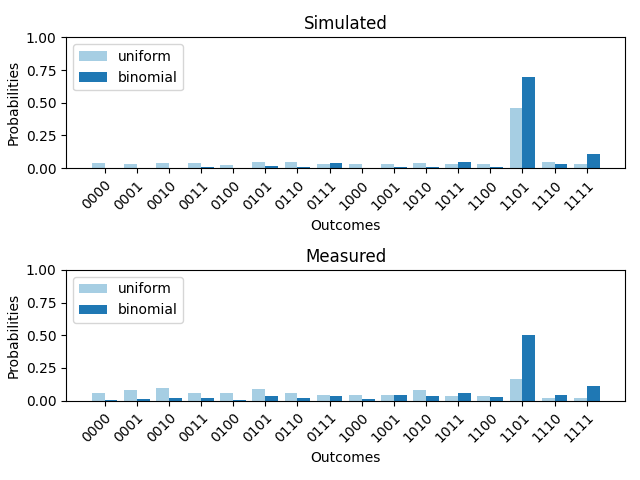}
\caption{\label{fig:set-search-n-4}A comparison of the uniform and binomial versions of set search, on an example with $4$ qubits, with search target $1101$ and $1$ iteration, using a simulator (top) and real quantum hardware (bottom).}
\end{figure}

%%%%%%%%%%%%%%%%%%%%%%%%%%%%%%%%%%%%%%%%%%%%%%%%
\subsection{\label{subsec:array-retrieval}Array Element Retrieval}
%%%%%%%%%%%%%%%%%%%%%%%%%%%%%%%%%%%%%%%%%%%%%%%%

In~\cite{Gilliam2020Optimizing}, we discuss array search, where we can use a Quantum Dictionary~\cite{Gilliam2019, Gonciulea2019} to index values by a separate register that holds all indices in superposition.
In order to retrieve a value, we need to perform a search on the index register.
The classical version of associate arrays uses direct memory access, or hashing, to speedup value retrieval.
In quantum computing, amplitude amplification is playing a similar role, using an oracle that will mark a desired index.
The type of superposition in the index register will affect the speed of the retrieval.
Here we are comparing the performance of a uniform and binomial superposition types in the index register.

As an example, consider a Quantum Dictionary circuit that encodes the partial and total sums of the elements in the array $[1, -1, 1]$.
The encoding of the array using the Quantum Dictionary pattern needs a $3$-qubit index register and a $2$-qubit value register, that can represent negative values using two's complement.
Recall that the Quantum Dictionary pattern encodes the array values as periodic signals, whose interference leads to the sums of all possible subsets.
A subset is represented by a binary string in the index register, and the corresponding sum is encoded in the value register.

In Figure~\ref{fig:qd-key-circuit}, we consider the case when the target state is $110$, the binary representation of $6$. The result of this circuit be seen in Figure~\ref{fig:array-retrieval-m-6}.

\begin{figure}[tb]
\includegraphics[width=8cm]{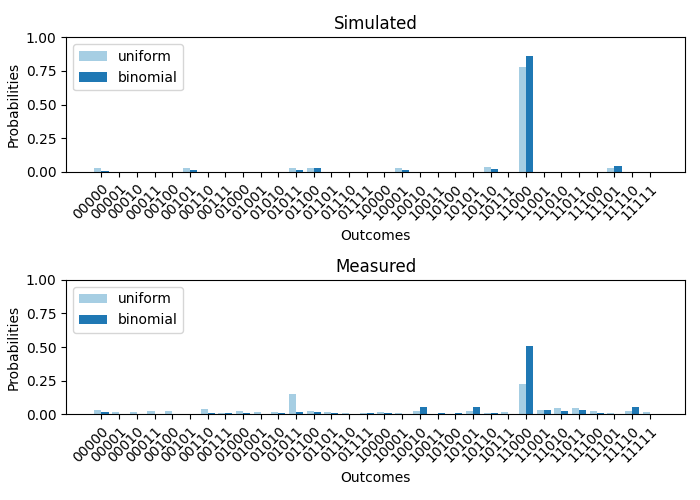}
\caption{\label{fig:array-retrieval-m-6}A comparison of the uniform and binomial versions of array element retrieval, on an example with $3$ index qubits, $2$ value qubits, and search target $110$ using a simulator (top) and real quantum hardware (bottom).}
\end{figure} 

%%%%%%%%%%%%%%%%%%%%%%%%%%%%%%%%%%%%%%%%%%%%%%%%
\subsection{\label{subsec:array-search}Array Element Search}
%%%%%%%%%%%%%%%%%%%%%%%%%%%%%%%%%%%%%%%%%%%%%%%%

When searching for an array element, we need an oracle that will mark a desired value.
We also need to correctly estimate the number of times the Grover Iterate has to be applied, in order to maximize the probability of measuring the desired value.
Alternatively, we can use the Grover Adaptive Search~\cite{Durr1996, Bulger2003, Baritompa2005} algorithm in order to avoid this estimation.
The adaptive approach with a small number of iterations may be the only option in the current era of quantum hardware, where the circuit depth is limited.

Using a binomial superposition in the index register allows for the random amplitude amplification of states based on the Hamming weight of the indices.
For example, one can first choose a binomial superposition that will favor the indices with more $1$s than $0$s in their binary representation, and another that will favor those with less $1$s than $0$s.

This approach is similar to classical binary search algorithms.
As an example, if we have $11$ qubits, we can choose the binomial distribution given by $\omega=\frac{15}{32}\pi$.
This will actually increase the initial amplitudes of all targets with more $0$s than $1$s in their binary representation compared to the uniform version, as seen in Table 2.

\begin{center}
\begin{tabular}{||c | c | c||}
\hline
k &  binomial & uniform \\ [0.5ex]
\hline\hline
0 & 0.03696  & 0.0221 \\
\hline
1 & 0.03349  & 0.0221 \\
\hline
2 & 0.03036  & 0.0221 \\
\hline
3 & 0.02751  & 0.0221 \\
\hline
4 & 0.02494  & 0.0221 \\
\hline
5 & 0.0226  & 0.0221 \\
\hline
\end{tabular}
\end{center}
TABLE 2. A table comparing the initial amplitudes for binomial version defined by $\omega=\frac{15}{32}\pi$, compared to the uniform version for various $k$s (where $k$ is the Hamming weight of an index).
\newline

Similarly, if we choose the binomial distribution given by $\omega=\frac{17}{32}\pi$, we maximize the the initial amplitudes of all targets with more $1$s than $0$s in their binary representation, as seen in Table 3.

\begin{center}
\begin{tabular}{||c | c | c||}
\hline
k &  binomial & uniform \\ [0.5ex]
\hline\hline
6 & 0.0226  & 0.0221 \\
\hline
7 & 0.02494  & 0.0221 \\
\hline
8 & 0.02751  & 0.0221 \\
\hline
9 & 0.03036  & 0.0221 \\
\hline
10 & 0.03349  & 0.0221 \\
\hline
11 & 0.03696  & 0.0221 \\
\hline
\end{tabular}
\end{center}
TABLE 3. A table comparing the initial amplitudes for binomial version defined by $\omega=\frac{17}{32}\pi$, compared to the uniform version for various $k$s (where $k$ is the Hamming weight of an index).
\newline
\newline

A higher or lower amplification can be obtained by adjusting the number of $\omega$ parameters, and their values.

We ran both the uniform and binomial versions of array element search on a real quantum computer, as seen in Fig.~\ref{fig:array_search_real}. 
Here, we are searching for negative values using the same array as in Sec.~\ref{subsec:array-retrieval}.

\begin{figure}[htb]
\includegraphics[width=8cm]{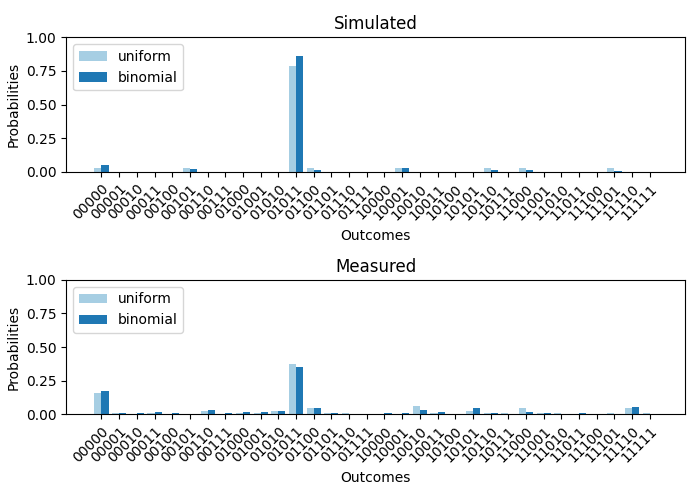}
\caption{\label{fig:array_search_real}A comparison of the uniform and binomial versions of array element search, on an example with $3$ index qubits and $2$ value qubits, searching for negative values using a simulator (top) and real quantum hardware (bottom).}
\end{figure} 

%%%%%%%%%%%%%%%%%%%%%%%%%%%%%%%%%%%%%%%%%%%%%%%%%%%%%%%%%%%%%%%%%%%%%%%%%%%
\section{\label{sec:conclusion} Conclusion}
%%%%%%%%%%%%%%%%%%%%%%%%%%%%%%%%%%%%%%%%%%%%%%%%%%%%%%%%%%%%%%%%%%%%%%%%%%%

We have outlined the method of splitting the basis states into partitions and amplifying the amplitudes of the states in each partition in a non-uniform way, before repeatedly applying a Grover Iterate specific to a given search target state. When the partition the target state belongs to is sufficiently amplified, the number of times the Grover Iterate needs to be applied can be reduced. 

This method can provide an additional parameter that can be used in optimizing quantum search. In particular, in the adaptive version of Grover's Search algorithm, in addition to varying the number of times the Grover Iterate is applied, one can also vary a level of non-uniform amplitude pre-amplification.

We have provided the numerical analysis for the partitioning of basis states based on the number of $1$s in their binary distribution, which essentially leads to a binomial distribution for the partition amplitudes.

We have shown how the method can be applied to set search, array element retrieval, and array element search. We have also successfully validated the results on a real quantum computer, and we believe these to be one of the earliest successful quantum search experiments on 4 and 5 qubits using a quantum computer.

%%%%%%%%%%%%%%%%%%%%%%%%%%%%%%%%%%%%%%%%%%%%%%%%%%%%%%%%%%%%%%%%%%%%%%%%%%%
\section*{Acknowledgements}
%%%%%%%%%%%%%%%%%%%%%%%%%%%%%%%%%%%%%%%%%%%%%%%%%%%%%%%%%%%%%%%%%%%%%%%%%%%
We would like to thank the Honeywell Quantum Solutions team for their help on the execution of our experiments on the Honeywell trapped-ion quantum computer.

%%%%%%%%%%%%%%%%%%%%%%%%%%%%%%%%%%%%%%%%%%%%%%%%%%%%%%%%%%%%%%%%%%%%%%%%%%%
\section*{Disclaimer}
%%%%%%%%%%%%%%%%%%%%%%%%%%%%%%%%%%%%%%%%%%%%%%%%%%%%%%%%%%%%%%%%%%%%%%%%%%%
This paper was prepared for information purposes by the Future Lab for Applied Research and Engineering (FLARE) Group of JPMorgan Chase \& Co. and its affiliates, and is not a product of the Research Department of JPMorgan Chase \& Co. 
JPMorgan Chase \& Co. makes no explicit or implied representation and warranty, and accepts no liability, for the completeness, accuracy or reliability of information, or the legal, compliance, tax or accounting effects of matters contained herein.
This document is not intended as investment research or investment advice, or a recommendation, offer or solicitation for the purchase or sale of any security, financial instrument, financial product or service, or to be used in any way for evaluating the merits of participating in any transaction.

\begin{figure*}[htb]
\includegraphics[width=16cm]{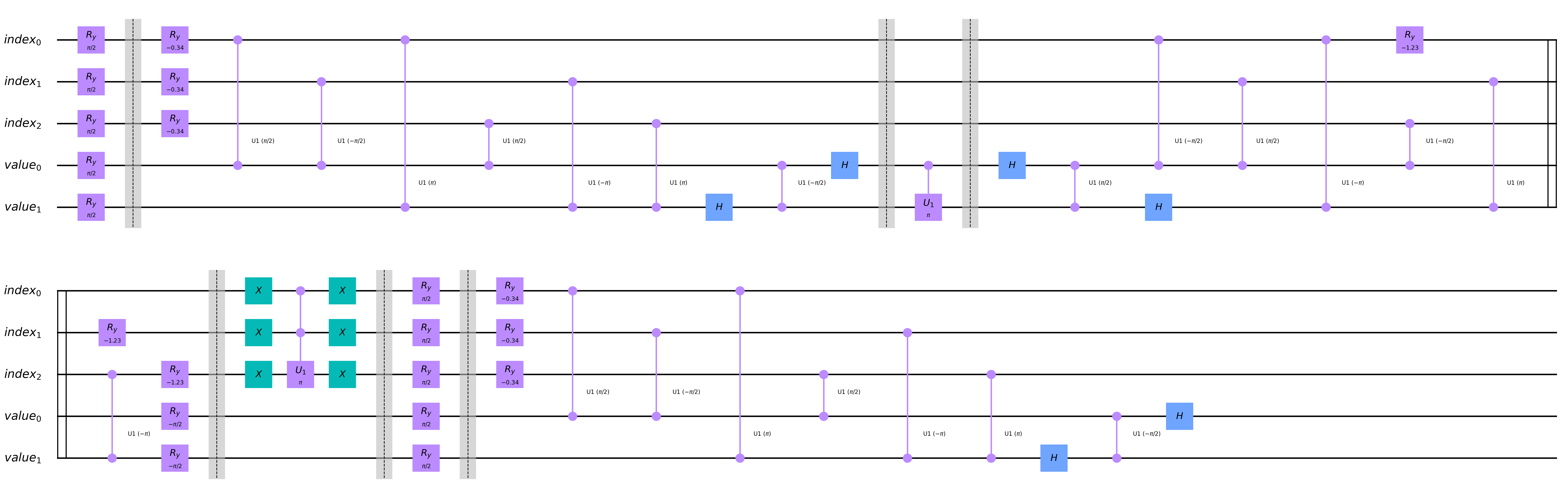}
\caption{\label{fig:qd-key-circuit}A circuit implementing the binomial version of array element retrieval with $1$ iteration and search target $110$.}
\end{figure*}

%%%%%%%%%%%%%%%%%%%%%%%%%%%%%%%%%%%%%%%%%%%%%%%%%%%%%%%%%%%%%%%%%%%%%%%%%%%
\bibliography{main}% Produces the bibliography via BibTeX.
%%%%%%%%%%%%%%%%%%%%%%%%%%%%%%%%%%%%%%%%%%%%%%%%%%%%%%%%%%%%%%%%%%%%%%%%%%%
 
\end{document}